\documentclass[11pt]{article}

\usepackage[margin=1.15in]{geometry}
\usepackage{amsmath,amssymb,amsthm,mathtools}
\usepackage{microtype}
\usepackage{algorithm}
\usepackage[noend]{algpseudocode}
\usepackage[numbers,sort&compress]{natbib}
\usepackage[hidelinks]{hyperref}
\usepackage[nameinlink,capitalise]{cleveref}

\hypersetup{
  pdftitle={Exact Hill Shares Are Simultaneous Guarantees},
  pdfauthor={Bo Li, Hanyu Li, Chenghua Liu},
  pdfsubject={Fair allocation of indivisible bads},
  pdfkeywords={Hill's share, indivisible bads, maximin share, polynomial-time algorithm}
}

\allowdisplaybreaks

\theoremstyle{plain}
\newtheorem{theorem}{Theorem}[section]
\newtheorem{lemma}[theorem]{Lemma}
\newtheorem{proposition}[theorem]{Proposition}
\newtheorem{corollary}[theorem]{Corollary}

\theoremstyle{definition}

\makeatletter
\providecommand{\theHALG@line}{}
\renewcommand{\theHALG@line}{\thealgorithm.\arabic{ALG@line}}
\makeatother

\theoremstyle{remark}

\newcommand{\MMS}{\operatorname{MMS}}

\newcommand{\bucket}{\operatorname{bucket}}

\title{\textbf{Exact Hill's Shares Are Simultaneous Guarantees}}
\author{Bo Li\textsuperscript{1}\thanks{\href{mailto:comp-bo.li@polyu.edu.hk}{\texttt{comp-bo.li@polyu.edu.hk}}}\quad
Hanyu Li\textsuperscript{2}\thanks{\href{mailto:lhydave@pku.edu.cn}{\texttt{lhydave@pku.edu.cn}}}\quad
Chenghua Liu\textsuperscript{3}\thanks{\href{mailto:liuch.russell@gmail.com}{\texttt{liuch.russell@gmail.com}}}\\[0.5em]
\textsuperscript{1}Department of Computing, The Hong Kong Polytechnic University\\[-0.1em]
\textsuperscript{2}CFCS, School of Computer Science, Peking University\\[-0.1em]
\textsuperscript{3}Institute of Software, Chinese Academy of Sciences}
\date{}

\begin{document}

\maketitle

\begin{abstract}
Fair division of indivisible bads seeks allocations that guarantee every agent a bundle whose cost is no larger than a meaningful fairness benchmark. The canonical minimax share has widely been used; unfortunately, it is not a simultaneous guarantee. Hill (\emph{Ann. Probab.}, 1987) initiated a complementary approach in which the share depends only on the number of agents and the
largest possible single-item value. Li et al. (\emph{ACM Trans. Econ. Comput.}, 2024) gave the exact Hill formula and proved that its {\em monotone closure} is a simultaneous guarantee. The closure treats the largest-item cost only as an upper bound and can be strictly larger than the share conditioned on the \emph{actual} largest item.  They asked whether this smaller exact share is itself simultaneously guaranteed for three or more agents. We resolve this open question affirmatively.  For any number of agents and arbitrary heterogeneous largest-item costs, there is one allocation that satisfies every agent's exact Hill's share. We further provide a polynomial-time algorithm computes such an allocation. Our algorithm combines an ordered moving knife with a tail-domination invariant, and a one-sided trimmed subset-sum routine for the two-agent endpoint without computing an exact minimax partition.
\end{abstract}

\section{Introduction}

The task of fair division is to allocate objects among agents with equal
rights but different preferences \citep{Steinhaus1949}.  We consider the
standard additive model for indivisible bads
\citep{AzizRaucheckerSchryenWalsh2017,LiMoulinSunZhou2024}.  Let
\(N\) be the set of \(n\) agents, labeled \(1,\ldots,n\), and \(M\) be
the set of bads.  Agent \(i\) has a nonnegative additive disutility \(v_i\).
We scale each agent's disutility so that receiving all the bads would cost 1, i.e., \(v_i(M)=1\).
An allocation \((A_1,\ldots,A_n)\) partitions \(M\), agent \(i\) receives bundle $A_i$ and incurs
cost \(v_i(A_i)\).  Following the guaranteed-share viewpoint, we seek a
personal cost bound, depending only on \(v_i\) and \(n\), that one allocation
can meet simultaneously for all agents.

If the bads were divisible, each agent could be charged at most her
proportional share \(1/n\).  Indivisibility makes this impossible when a
single chore carries a large fraction of an agent's total disutility.  The
minimax-share criterion for bads captures this issue
\citep{Budish2011,AzizRaucheckerSchryenWalsh2017}.  Agent \(i\) imagines
partitioning the bads into \(n\) bundles and then receiving the most costly
bundle; she chooses a partition minimizing that cost.
Formally, let \(\Pi_n(M)\) be the set of all partitions of \(M\) into \(n\) bundles.
Then this share is defined as
\[
  \MMS_n(v_i)=
  \min_{(P_1,\ldots,P_n)\in \Pi_n(E)}\max_{1\le r\le n}v_i(P_r).
\]
The inner maximum selects the most costly bundle, and the outer
minimum selects the partition that makes this cost as small as possible.
This divide-and-choose share depends on the full valuation, is NP-hard to
compute exactly, and cannot always be guaranteed simultaneously for
heterogeneous agents
\citep{Budish2011,GareyJohnson1979,AzizRaucheckerSchryenWalsh2017,
FeigeSapirTauber2021}.

An earlier line of work asks for simpler guarantees that depend only on the
total value and the value of the largest item.  Let \(\alpha\) be the
largest value that a single item can have.  Hill's
probability-measure theorem gives every agent a share \(V_n(\alpha)\) when
every atom has value at most \(\alpha\) \citep{Hill1987}, where $n$ is the number of agents.
Demko and Hill formulated the corresponding indivisible-object allocation
problem \citep{DemkoHill1988}.  For goods,
\citet{MarkakisPsomas2011} allowed agents to have different largest-item
values and gave a polynomial-time allocation algorithm.  Gourv\`es, Monnot,
and Tlilane then showed that Hill's monotone guarantee is not tight and
studied the finer model in which \(\alpha\) is the \emph{actual} largest-item
value \citep{GourvesMonnotTlilane2015}.

This exact-maximum model was brought to indivisible bads in
\citep{LiMoulinSunZhou2024}.  Fix a possible cost \(\alpha\in(0,1]\) for the
most costly single bad, and let
\(\mathcal V(\alpha)\) contain all normalized additive disutilities (i.e., $v_i(M)=1$ and $v_i(S)=\sum_{e\in S}v_i(e)$ for all $S \subseteq M$), over
arbitrary finite item sets, whose largest item has cost exactly \(\alpha\), i.e., $\alpha=\max_{e\in M}v(e)$.
The exact Hill's share asks for the largest minimax share that can arise in
this class;  
We denote this value by
\[
  \Delta_n^+(\alpha)
  =\sup_{v\in\mathcal V(\alpha)}\MMS_n(v).
\]
A closed formula of $\Delta_n^+(\alpha)$ is given in \citep{LiMoulinSunZhou2024}.  For a given instance,
let \(\alpha_i=\max_{e\in M}v_i(e)\) be agent \(i\)'s largest single-bad
cost.  Because the exact share is not monotone in this value, they also
considered its {\em monotone closure}
\[
  V_n(\alpha)=\sup_{\beta\le\alpha}\Delta_n^+(\beta).
\]
Here \(V_n(\alpha)\) covers every possible actual largest-item cost at most
\(\alpha\).  They proved that every instance admits an allocation such that for every $i=1,\ldots,n$, $v_i(A_i)\le V_n(\alpha_i)$. 
That is, the monotone closure of Hill's share is a simultaneous guarantee.
Since \(V_n(\alpha_i)\ge\Delta_n^+(\alpha_i)\), this result does not guarantee
the exact shares.  
They left open whether the exact Hill's share (i.e., the  stronger inequalities with
\(\Delta_n^+(\alpha_i)\)) can hold simultaneously for three or more agents.

\medskip

The current paper resolves this open problem affirmatively.

\begin{theorem}
\label{thm:main}
Every normalized nonnegative additive instance with \(n\ge2\) admits an allocation \((A_1,\ldots,A_n)\) of \(M\) satisfying
\[
  v_i(A_i)\le \Delta_n^+(\alpha_i)
  \qquad\text{for every }i=1,\ldots,n.
\]
Such an allocation can be found in $O(m^3)$ time. 
\end{theorem}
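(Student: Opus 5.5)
The plan is to build the allocation one bundle at a time, giving bundles to agents in a fixed order, and to keep an invariant strong enough that the last two agents can always be served. First I would extract from the closed formula for \(\Delta_n^+(\alpha)\) in \citep{LiMoulinSunZhou2024} the structural facts the argument needs.

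\emph{Facts about the share.} Write \(\theta_i=\Delta_n^+(\alpha_i)\). By definition \(\theta_i\ge\MMS_n(v_i)\). Taking one item of cost \(\alpha_i\) and spreading the rest finely gives \(\theta_i\ge\max\{\alpha_i,1/n\}\). The formula is piecewise in \(\alpha\). On each piece \(\alpha\in\bigl(\tfrac{1}{k+1}\cdot(\cdot),\tfrac1k\cdot(\cdot)\bigr]\) it prescribes a worst-case ``bucket'' configuration: a certain number of copies of \(\alpha\) plus infinitesimal mass. I will assume that the share is exactly the bound forced by packing \(k\)-item blocks of size at most \(\alpha\) into \(n\) bins. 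The key consequence I want is a \emph{tail lemma}. Suppose \(j\) agents remain and a set \(R\) of items remains, and every remaining agent \(i\) has
\[
v_i(R)\le 1-(n-j)\cdot(\text{what she concedes per removed bundle}).
\]
Then a greedy bundle that stops just before exceeding \(\theta_i\) has cost more than \(\theta_i-\alpha_i\), and in the bucket regime it holds at least the number of large items the formula assumes. This per-bundle lower bound is what makes the remaining mass shrink fast enough.

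\emph{Ordered moving knife.} Sort the items in decreasing order of a common reference; following \citep{LiMoulinSunZhou2024}, I would use a bag-filling order in which large items go first, one per bag, and are then topped up with small items. Each remaining agent \(i\) ``shouts'' as soon as the current bag would exceed \(\theta_i\) were the next item added. The bag goes to a shouting agent, and ties are broken in favour of the largest \(\alpha_i\), i.e.\ the least demanding threshold. The invariant I maintain, which I expect is the ``tail domination'' of the abstract, is this. After \(n-j\) bags have been handed out, every remaining agent \(i\) satisfies \(v_i(R)\le F_{j}(\alpha_i)\), where \(F_j\) is the largest residual mass that the exact-share formula certifies can be split into \(j\) bundles each of cost at most \(\theta_i\) using items of size at most \(\alpha_i\). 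Proving that the invariant is preserved uses two facts. First, a non-recipient values the departed bag at least as much as its recipient's trigger bound. Second, the inequality \(F_{j+1}(\alpha)-(\text{min bag value})\le F_j(\alpha)\), which is a direct calculation from the formula on each piece. The crucial point is that \(F\) is computed with the \emph{actual} \(\alpha_i\), not with a monotone envelope. Because \(\Delta_n^+\) is not monotone, I would handle each piece separately. The subtle case is an agent whose \(\alpha_i\) sits just above a breakpoint: there the share drops, and I must check that the bucket structure compensates, because such an agent sees at least one more large item per bag.

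\emph{Two-agent endpoint and running time.} When \(j=2\), the invariant gives both remaining agents a remainder \(R\) with \(v_i(R)\le F_2(\alpha_i)\). I then need a split \(R=B_1\cup B_2\) with \(v_1(B_1)\le\theta_1\) and \(v_2(B_2)\le\theta_2\). Computing \(\MMS_2\) is NP-hard, but we never need the optimum; a one-sided trimmed subset sum suffices. Scan the prefixes of \(R\) in a suitable order, and maintain a trimmed list of candidate subsets for agent 1 with \(v_1\le\theta_1\), keeping at most \(O(m)\) representatives per value class of agent 2. I would then show that some candidate leaves agent 2 a complement of cost at most \(\theta_2\). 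The argument is by cut-and-choose: the slack \(2\theta_i-v_i(R)\ge\alpha_i\) given by \(F_2\) guarantees that a prefix cut works for at least one orientation. Each of the \(n-2\) knife rounds costs \(O(m^2)\) for shout checks, and the endpoint costs \(O(m^3)\), which gives \(O(m^3)\) overall.

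I expect the main obstacle to be preserving the invariant across the non-monotone breakpoints of \(\Delta_n^+\). An agent with a smaller \(\alpha\) can have a \emph{larger} share than one just above a breakpoint. The tie-breaking rule and the choice of \(F_j\) must therefore be tuned piecewise, and I would first try verifying \(F_{j+1}-\text{bag}\le F_j\) symbolically in each regime of the formula.
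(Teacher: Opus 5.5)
Your architecture (ordered moving knife, a tail-domination step, a trimmed subset sum at two agents) matches the paper's. However, the step that carries the theorem is missing, and the invariant you propose in its place is false at the start.

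\textbf{The invariant and the missing tail-domination lemma.} You keep the original threshold $\theta_i=\Delta_n^+(\alpha_i)$ and certify the remainder only through $v_i(R)\le F_j(\alpha_i)$, with $F_j$ defined via items of size \emph{at most} $\alpha_i$. Such a certificate must cover every residual maximum $\beta\le\alpha_i$. It is therefore governed by the monotone closure $V_j$, not by $\Delta_j^+$. At $j=n$ it asserts $V_n(\alpha_i)\le\Delta_n^+(\alpha_i)$, which fails exactly in the non-monotone cases the theorem is about. For example, with $n=3$ and $\alpha_i=2/5$ one has $\Delta_3^+(2/5)=2/5$, while four items of cost $1/4$ (all at most $2/5$) have $\MMS_3=1/2$. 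Your remark that $F$ uses the actual $\alpha_i$ rather than an envelope contradicts this definition.

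What is needed is to re-index each residual state by its actual total $T$ and actual largest item $p$, and to take $T\Delta_{q-1}^+(p/T)$ as the survivor's next target. One then proves tail domination: if the removed bag of cost $C$ and the next item $p$ satisfy $C+p>d$, then $T\Delta_{q-1}^+(p/T)\le d$. The paper derives this as follows.
\begin{enumerate}
\item Since $d\ge(k+1)\alpha$, the bag contains at least $k+1$ top-ranked items, so $C\ge\alpha+kp$.
\item The new share is bounded by $\max\{(h+1)p,\frac{h+2}{(q-1)(h+1)}(T-p)\}$, with $h=k+1$ when $p\le d/(k+2)$ and $h=k$ otherwise.
\item The passage from three agents to two needs a separate check against the exceptional ranges of $\Delta_2^+$.
\end{enumerate}
This is precisely the inequality $F_{j+1}-\text{bag}\le F_j$ that you defer, so as written the proposal does not prove the theorem.

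\textbf{Bag mechanics.} The lemma also fixes details you left loose. Its proof uses that the bag is a contiguous prefix of the common ranked order produced by the ordered-instance reduction, so every removed item is at least $p$. Your scheme of one large item topped up with small ones does not provide this. The bag must also go to the agent who crosses \emph{last}, so that every survivor sees bag plus next item exceeding her own bound. Handing it to the first agent to shout gives survivors no such inequality: they may regard the bag as nearly worthless and yet must now share the remainder among fewer agents. Note too that a larger $\alpha_i$ is not a less demanding threshold, since $\Delta_n^+$ is not monotone.

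\textbf{The two-agent endpoint.} Your endpoint rests on the slack $2\theta_i-v_i(R)\ge\alpha_i$, which is false. Already for $n=2$, where $R=M$, taking $\alpha_i=7/27$ gives $\Delta_2^+(7/27)=5/9$ and $2\theta_i-1=1/9<7/27$. The width of the feasible interval $[T-D,D]$ is only $W=2D-T\ge\frac37p$, tight at $p/T=7/27$. So neither a greedy cut nor a prefix cut is guaranteed to land in it, which is why subset sums are needed at all.

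The paper's routine is divide-and-choose in which only the divider's own sums are trimmed. It uses buckets of width $\eta=W/(2s)$ and runs two passes, one keeping the largest representative per bucket and one keeping the smallest. An exact witness $x\le T/2$ is then overshot by at most $W/2$, and one with $x\ge T/2$ is undershot by at most $W/2$; both stay in $[T-D,D]$. The width bound gives $O(s^2)$ buckets per layer. The chooser needs no trimming, because her cheaper bundle costs at most half her residual total and $\Delta_2^+\ge1/2$. Your list of agent-1 subsets indexed by agent-2 value classes has no error analysis and is not shown to contain a valid split.
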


The earlier proof for the monotone closure \(V_n\) follows a recursive
template \citep{LiMoulinSunZhou2024}.  Its goal in each step is to finish the
allocation for one agent without destroying the guarantees promised to
everyone else.  An ordered moving knife grows a bundle along the bads, assigns
a safe bundle to one agent, and then removes both that bundle and that agent.
The same step is repeated on the smaller instance.  Monotonicity allows the
proof to compare each remaining agent's target before and after the removal,
which is what makes the recursion valid.

The same argument does not directly give Theorem~\ref{thm:main}.  For the
exact share \(\Delta_n^+\), removing a bundle changes the number of agents,
the remaining total cost, and the largest remaining bad, and the new target
can move in either direction.  Our first technical contribution is a tail-domination
lemma that replaces the missing monotonicity.  It proves that the peeled
bundle removes enough of the costly end of the instance to compensate for
having one fewer agent.  The moving-knife recursion therefore remains valid,
and one induction proves the exact simultaneous guarantee.

The second contribution  is the polynomial-time implementation.  For the exact share,
the known two-agent argument obtains existence from an exact minimax
partition \citep{LiMoulinSunZhou2024}.  Computing such a partition can encode
the NP-hard \textsc{Partition} problem \citep{GareyJohnson1979}.  We show that
the exact Hill formula leaves enough room to use a sufficiently balanced
split instead.  A trimmed subset-sum routine finds such a partition in polynomial
time, completing the constructive proof.

This paper is organized as follows.  Section~\ref{sec:related} surveys
related work.  
Section~\ref{sec:framework} develops the ordered recursive
framework.  Section~\ref{sec:tail} proves the tail-domination inequality,
and Section~\ref{sec:two-agent-base} gives the polynomial two-agent base.
Section~\ref{sec:algorithm} combines these parts into the allocation algorithm
and its proof.  The final section concludes.

\section{Related work}
\label{sec:related}

Several familiar fairness notions are tests on a completed allocation rather
than worst-case shares.  Proportionality up to one object (Prop1) and
proportionality up to any object (PropX) allow one object to be removed or
added \citep{AzizMoulinSandomirskiy2020,Moulin2019}; envy-freeness up to one
item (EF1) and envy-freeness up to any item (EFX) similarly relax envy-freeness
\citep{LiptonMarkakisMosselSaberi2004,Budish2011,CaragiannisEtAl2019}.  These
conditions are easy to verify after an allocation is produced, but they do
not specify the kind of numerical guarantee that an agent can compute from
her own disutility before the allocation.  The truncated proportional share
is another efficiently verifiable target, but for goods it can be stronger
than the maximin share and therefore cannot always be guaranteed
\citep{BabaioffEzraFeige2022}.

Results for goods also do not transfer to bads by simply changing signs: even
the one-object relaxations above behave differently in the two settings.  We
refer to \citet{Moulin2019} and \citet{AmanatidisEtAl2023} for broader surveys
of fair division with indivisible objects.  For maximin shares in particular,
exact simultaneous feasibility may fail for chores
\citep{AzizRaucheckerSchryenWalsh2017,FeigeSapirTauber2021}.

The closest work derived the exact Hill's share for bads, guaranteed its
monotone closure, and left the simultaneous exact-share question open for
\(n\ge3\) \citep{LiMoulinSunZhou2024}.  Our theorem resolves this question.
Technically, we combine the standard ordered-instance reduction
\citep{BouveretLemaitre2016,HuangLu2021} with the new tail-domination argument
and the polynomial two-agent construction described in the introduction.

\section{The recursive allocation framework}
\label{sec:framework}

In this section, we introduce the recursive allocation framework of our algorithm, and show how our algorithm differs from that in \citep{LiMoulinSunZhou2024}.


\subsection{Ordered instances reduction}
\label{sec:ordered}

The ordered-instance reduction is standard in maximin-share allocation
\citep{BouveretLemaitre2016,HuangLu2021}; it is also the first step of the
algorithm in \citep{LiMoulinSunZhou2024}.  We state only the form
used here.

For each agent \(i\), sort her item costs from largest to smallest:
\[
  w_i(1)\ge w_i(2)\ge\cdots\ge w_i(m).
\]
Rank \(r\) now means agent \(i\)'s \(r\)-th most costly item.  These ranks
need not refer to the same original item for different agents; they form an
abstract instance in which all agents nevertheless share the order
\(1,\ldots,m\).  Agent \(i\)'s cost for rank \(r\) is \(w_i(r)\).

\begin{lemma}[Ordered-instance reduction~\citep{BouveretLemaitre2016,HuangLu2021}]
\label{lem:lifting}
Every allocation of the ordered rank items can be used to produce an allocation of
the original items without increasing any agent's cost.
\end{lemma}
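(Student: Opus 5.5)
The plan is to give the standard picking-sequence lifting argument, specialized to bads. Fix an allocation $(B_1,\ldots,B_n)$ of the rank items $\{1,\ldots,m\}$. Process the ranks in \emph{increasing} order $r=1,2,\ldots,m$, i.e., from the most costly rank to the least costly. When rank $r$ is reached, let $i$ be the agent with $r\in B_i$. We then let agent $i$ take, among the original bads not yet assigned, one she finds \emph{least} costly under $v_i$. Each step removes exactly one original bad and there are $m$ steps, so the procedure yields a partition $(A_1,\ldots,A_n)$ of $M$ with $|A_i|=|B_i|$.

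The key step is a per-pick comparison. When rank $r$ is processed, exactly $r-1$ original bads have already been removed, so $m-r+1$ remain. Among agent $i$'s $r$ most costly bads, that is, the items realizing $w_i(1),\ldots,w_i(r)$, at most $r-1$ are gone, so at least one remains. Hence agent $i$'s least costly remaining bad has cost at most $w_i(r)$. A cleaner way to say the same thing: the remaining set has size $m-r+1$, so its minimum under $v_i$ is at most the $(m-r+1)$-th smallest value of $v_i$, which is $w_i(r)$. Summing over the ranks in $B_i$ gives $v_i(A_i)\le\sum_{r\in B_i}w_i(r)$, which is exactly agent $i$'s cost in the ordered instance.

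Two routine checks remain. First, ties in sorting are harmless, since $w_i$ is just the multiset of $v_i$-values. Second, the transformation runs in $O(nm+m^2)$ time, or $O(m\log m)$ per agent after sorting, so it does not affect the $O(m^3)$ bound in Theorem~\ref{thm:main}. I would also note that the normalization $v_i(M)=1$ and the value $\alpha_i=w_i(1)$ are unchanged by the reduction. This means the targets $\Delta_n^+(\alpha_i)$ are the same in both instances, so it suffices to prove Theorem~\ref{thm:main} for ordered instances.

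The only delicate point, and the one I expect to be the main obstacle, is getting the direction right for bads. For goods one processes ranks from most valuable to least and picks the \emph{best} remaining item. For chores we must still process from the high-cost ranks, but let the agent pick her \emph{cheapest} remaining bad. The counting bound then has to be stated via the size of the remaining set, $m-r+1$, rather than via which top items survive.
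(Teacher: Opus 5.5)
Your lifting runs in the wrong direction, and the per-pick inequality it relies on is false. If you process ranks $r=1,2,\ldots,m$ and let the owner take her cheapest remaining bad, the owners of the expensive ranks use up the cheap bads first. The owners of the cheap ranks are then left with expensive ones.

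A concrete failure: take two identical agents and two bads costing $0.9$ and $0.1$, with the ordered allocation giving rank $1$ to agent $1$ and rank $2$ to agent $2$. Agent $1$ picks first and takes the $0.1$ bad. Agent $2$ is left with cost $0.9>w_2(2)=0.1$.

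Both of your justifications break at the same point:
\begin{itemize}
\item That one of agent $i$'s $r$ most costly bads survives only shows that some remaining bad costs \emph{at least} $w_i(r)$. This is the goods argument. It is a lower bound and says nothing about the cheapest remaining bad from above.
\item The minimum over a set of $s=m-r+1$ bads is at most the $s$-th \emph{largest} value $w_i(m-r+1)$, not the $s$-th smallest. That value can exceed $w_i(r)$ once $r>(m+1)/2$.
\end{itemize}

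The fix, which is what the paper does, is to reverse the order. Process ranks $m,m-1,\ldots,1$, cheapest rank first, and let the owner of rank $r$ take her cheapest remaining original bad. When rank $r$ is reached, ranks $m,\ldots,r+1$ have consumed $m-r$ bads, so exactly $r$ bads remain. At most $r-1$ bads satisfy $v_i(e)>w_i(r)$, so any $r$-element set contains a bad of cost at most $w_i(r)$, and the chosen bad costs at most $w_i(r)$. Summing over the ranks in your $B_i$ gives $v_i(A_i)\le\sum_{r\in B_i}w_i(r)$.

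Your closing remark therefore has the direction backwards: for bads you must start from the low-cost ranks, not the high-cost ones. Once the order is reversed, your remaining points are fine: ties, preservation of $\alpha_i$ and the normalization, and the running time.
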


The standard reduction processes ranks from last to first and assigns a
cheapest remaining original item at each step.  With sorted item lists and
moving pointers, it takes \(O(nm)\) time after sorting.

\subsection{The recursive prefix step}
\label{sec:recursive-step}

The moving-knife recursion follows the prior work
\citep{LiMoulinSunZhou2024}, but uses a different threshold.  Suppose an agent
sees a remaining item set of total cost \(T>0\), with largest single-item cost
\(p\), and \(q\ge3\) agents remain.  We use the exact Hill's share for this
actual residual state, written on the original scale as
\[
  B_q(T,p)=T\Delta_q^+(p/T).
  \tag{1}\label{eq:scaled-hill}
\]
For an all-zero remaining set, let \(B_q(0,0)=0\).

Now consider a common ordered suffix beginning at rank \(\ell\).  For each
active agent \(i\), let \(T_i\) be her total cost for this suffix, let
\(p_i=w_i(\ell)\) be her largest remaining item cost, and let
\(B_i=B_q(T_i,p_i)\) be her current limit.  Starting at rank \(\ell\), grow a
prefix until it first becomes too costly for that agent.  Formally, let
\[
  r_i=\min\left\{r\ge\ell:
    \sum_{t=\ell}^{r}w_i(t)>B_i
  \right\}.
\]
Thus the prefix ending immediately before \(r_i\) costs at most \(B_i\) to
agent \(i\), while including rank \(r_i\) would exceed her limit.

Choose an agent \(a\) whose crossing rank \(r_a\) is largest, give her ranks
\(\ell,\ldots,r_a-1\), and recurse on the suffix beginning at \(r_a\) with
one fewer agent.  Agent \(a\) is safe by the definition of \(r_a\).  For
every remaining agent \(j\), maximality of \(r_a\) gives \(r_j\le r_a\), so
\[
  \sum_{t=\ell}^{r_a}w_j(t)>B_j.
\]
The removed prefix consists of ranks \(\ell,\ldots,r_a-1\); rank \(r_a\) is
the largest item of the residual suffix.  Tail domination, proved next,
says that this crossing forces agent \(j\)'s new \((q-1)\)-agent
Hill bound to be at most the old limit \(B_j\).  This is exactly the fact
needed to repeat the same step recursively.

In the prior algorithm, the same geometric step is run with the larger
monotone share \(V_q\).  Its analysis proves that, after deleting the chosen
prefix and renormalizing the suffix, every survivor's new \(V_{q-1}\) limit
fits within her old \(V_q\) limit
\citep{LiMoulinSunZhou2024}.  That argument does not
establish the corresponding comparison for
\(\Delta_q^+\), which can move in either direction when the residual
largest-item ratio changes.  Lemmas~\ref{lem:ordinary-tail} and
\ref{lem:exceptional-tail} are our replacement for precisely this step.
Degenerate all-zero suffixes and suffixes with only one positive-cost item
are handled directly in the full algorithm.

\section{Tail domination}
\label{sec:tail}

The prefix step reduces the recursion to one comparison: after a strict
crossing, the exact Hill bound of the residual tail with one fewer agent must
not exceed the old bound.  
We prove this comparison in this section.  

\subsection{Exact-formula tools}
\label{sec:formulas}

For three or more agents, we use the following exact formula.

\begin{proposition}
\label{prop:qge3formula}
Let \(T>0\), \(0<p\le T\), and
\[
  k=\left\lfloor\frac{T-p}{qp}\right\rfloor.
\]
Then
\[
  B_q(T,p)=
  \max\left\{
    (k+1)p,\
    \frac{k+2}{q(k+1)}(T-p)
  \right\}.
  \tag{2}
\]
\end{proposition}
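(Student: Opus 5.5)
The proposition should be a rescaled restatement of the closed formula for \(\Delta_q^+\) proved in \citep{LiMoulinSunZhou2024}, so the plan is to reduce it to that formula rather than re-derive the minimax analysis. By definition (1), \(B_q(T,p)=T\Delta_q^+(p/T)\). Put \(\alpha=p/T\in(0,1]\). Then
\[
  \frac{T-p}{qp}=\frac{1-\alpha}{q\alpha},
\]
so the integer \(k\) in the statement is the same whether it is computed on the original scale or the normalized scale. It therefore suffices to show that for \(q\ge3\)
\[
  \Delta_q^+(\alpha)=\max\left\{(k+1)\alpha,\ \frac{k+2}{q(k+1)}(1-\alpha)\right\},
  \qquad k=\left\lfloor\frac{1-\alpha}{q\alpha}\right\rfloor .
\]
Multiplying through by \(T\) then gives (2), since \(T(k+1)\alpha=(k+1)p\) and \(T(1-\alpha)=T-p\).

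The main step is to match this display with the form in which the cited paper states its formula. I expect that form to be piecewise over intervals of \(\alpha\), indexed by an integer \(k\ge0\), of the shape
\[
  \frac{1}{q(k+1)+1}<\alpha\le\frac{1}{qk+1}.
\]
The translation rests on the equivalence
\[
  \frac{1-\alpha}{q\alpha}\ge k \iff \alpha\le\frac{1}{qk+1}.
\]
Hence \(\lfloor (1-\alpha)/(q\alpha)\rfloor=k\) holds exactly on \(1/(q(k+1)+1)<\alpha\le1/(qk+1)\). On each such interval I will check that the cited piecewise expression equals the maximum of the two affine pieces above. If the cited formula instead names the region where each piece is active, I will solve \((k+1)\alpha=\frac{k+2}{q(k+1)}(1-\alpha)\) for the crossing point and confirm it lies inside the interval.

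The obstacle I expect is endpoint and convention bookkeeping. The floor function and the cited breakpoints may disagree at \(\alpha=1/(qk+1)\). At such a point I will verify that both candidate indices \(k\) and \(k-1\) give the same value, so the choice does not matter there. The boundary case \(\alpha=1\), i.e.\ \(p=T\), gives \(k=0\) and value \(\max\{p,0\}=p\). This is correct, because the single item of cost \(p\) must go to some bundle. The hypothesis \(q\ge3\) is exactly where the cited formula takes this uniform shape; the two-agent case is treated separately in the paper, so I will not attempt to cover it.

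If the cited formula turns out to be stated in an incompatible form, the fallback is to prove both inequalities directly. For the lower bound, take one item of cost \(p\) and split the remaining \(T-p\) into suitably many equal items of cost just below \(p\); this is a worst case for both terms. For the upper bound, use a greedy or LPT-type partition argument.
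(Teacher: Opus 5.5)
Your proposal is correct and takes the same route as the paper, which gives no independent argument for this proposition: it imports the closed formula for \(\Delta_q^+\) with \(q\ge3\) from Li, Moulin, Sun, and Zhou and rescales it through \(B_q(T,p)=T\Delta_q^+(p/T)\), exactly as you plan, with \(k\) unchanged by the scaling. Your endpoint check goes through, since at \(\alpha=1/(qk+1)\) the indices \(k\) and \(k-1\) both give \((k+1)/(qk+1)\), and you are right to require \(q\ge3\), which the statement leaves implicit and without which it fails (for \(q=2\), \(\alpha=0.27\) gives \(0.5475\) instead of the true \(0.562\)).
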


The two terms in (2) capture the two ways a residual instance can be hard:
several large items may already force a high cost, or the remaining mass may
be too large to spread evenly.  Tail domination must control both terms
after a prefix is removed.

The integer \(k\) depends on the ratio \(p/T\), and this ratio changes after
every removal.  Our tail proof packages the known formula into an upper
envelope indexed by an integer \(h\), avoiding the need to locate every new
ratio in its exact formula range.  Put
\[
  g_{q,h}(x)=
  \max\left\{
    (h+1)x,\
    \frac{h+2}{q(h+1)}(1-x)
  \right\}.
\]
For
\[
  k=\left\lfloor\frac{1-x}{qx}\right\rfloor,
\]
applying Proposition~\ref{prop:qge3formula} with \(T=1\) and \(p=x\)
and using \eqref{eq:scaled-hill} gives
\[
  \Delta_q^+(x)=B_q(1,x)=g_{q,k}(x).
\]
We next show that every member of the envelope family dominates this value:
\[
  \Delta_q^+(x)\le g_{q,h}(x)\qquad(h\ge0).
\]
Indeed, the two branches of \(g_{q,h}\) meet at
\[
  x=\frac{h+2}{q(h+1)^2+h+2},
\]
and adjacent envelopes meet at \(x=1/(q(h+1)+1)\).  If \(h<k\), the
decreasing branch for \(h\) dominates both branches for \(k\); if \(h>k\),
the increasing branch for \(h\) does so.  Endpoint values agree.

After scaling, define
\[
  G_{q,h}(T,p)=
  \max\left\{
    (h+1)p,\
    \frac{h+2}{q(h+1)}(T-p)
  \right\}.
  \tag{3}
\]
Then
\[
  B_q(T,p)\le G_{q,h}(T,p)
  \qquad\text{for every }h\ge0.
  \tag{4}
\]
Every member of this family has the same two branches as the exact formula
and dominates the exact share.  We may therefore choose \(h\) from the old
state: one choice controls a small new largest item, and another controls a
large one.  This freedom is the key algebraic device in the tail argument.

For \(q=2\), the exact formula has several additional ranges.  We need these
ranges for the transition from three agents to two and later to quantify the
slack in the terminal two-agent routine.  Write
\(d(\alpha)=\Delta_2^+(\alpha)\).

\begin{proposition}
\label{prop:q2formula}
For \(0<\alpha\le1\),
\[
d(\alpha)=
\begin{cases}
\max\{\alpha,1-\alpha\},
  & 1/3\le\alpha\le1,\\
2\alpha,
  & 2/7\le\alpha\le1/3,\\
(2+3\alpha)/5,
  & 7/27\le\alpha\le2/7,\\
3(1-\alpha)/4,
  & 1/5\le\alpha\le7/27,\\
\max\{(k+1)\alpha,(k+2)(1-\alpha)/(2(k+1))\},
  & 0<\alpha\le1/5.
\end{cases}
\tag{5}
\]
In the last line, \(k=\lfloor(1/\alpha-1)/2\rfloor\ge2\).
The formulas agree at every shared endpoint.
\end{proposition}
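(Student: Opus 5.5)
The plan is to obtain \eqref{eq:scaled-hill}'s two-agent case, i.e.\ formula (5), as a specialization of the closed formula for \(\Delta_n^+\) given in \citep{LiMoulinSunZhou2024}, rather than re-deriving it. I would take their expression at \(n=2\) and rewrite each of its ranges in the variable \(\alpha\). The regime \(\alpha\le 1/5\) is the generic one, matching the shape of Proposition~\ref{prop:qge3formula} with \(q=2\): there \(k=\lfloor(1-\alpha)/(2\alpha)\rfloor=\lfloor(1/\alpha-1)/2\rfloor\), and \(\alpha\le1/5\) gives \(k\ge2\). The ranges above \(1/5\) are the small-\(k\) exceptional cases, where the two-agent count of large items breaks the generic pattern. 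I would match these against the cited theorem one interval at a time.

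Next I would check consistency of the pieces at each shared endpoint.
\begin{itemize}
\item At \(\alpha=1/3\): \(\max\{1/3,2/3\}=2/3=2\alpha\).
\item At \(\alpha=2/7\): \(2\alpha=4/7\) and \((2+6/7)/5=4/7\).
\item At \(\alpha=7/27\): \((2+7/9)/5=5/9\) and \(3(20/27)/4=5/9\).
\item At \(\alpha=1/5\): \(3(4/5)/4=3/5\), while the last line with \(k=2\) gives \(\max\{3/5,\,8/15\}=3/5\).
\end{itemize}
Inside the last range, \(k\) changes value at \(\alpha=1/(2(k+1)+1)\). At these points the remark after (3) about adjacent envelopes meeting (with \(q=2\)) already shows that the two expressions coincide, so the function is continuous there as well.

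If a self-contained argument were wanted instead of the citation, I would prove each range in two directions.
\begin{itemize}
\item \emph{Lower bound.} Exhibit a hard instance: \(k+1\) items of cost \(\alpha\) plus small dust, or, in the exceptional windows, a few items of cost \(\alpha\) together with items of carefully chosen intermediate cost such as \((1-3\alpha)/2\). Every 2-partition of such an instance then has a bundle of at least the claimed cost.
\item \emph{Upper bound.} Give a greedy (LPT-style) 2-partition argument showing that no instance with maximum item \(\alpha\) is worse.
\end{itemize}
I expect the main obstacle to be the middle windows \([7/27,2/7]\) and \([1/5,7/27]\). There the extremal instances mix item sizes, and the upper bound needs a case analysis on how many items exceed \(1/4\) or \(1/3\) of the total. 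This is exactly why I would rely on the cited formula and treat the endpoint verification as the actual content of the proof.
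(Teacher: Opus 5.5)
Your plan matches the paper, which also imports formula (5) from the closed formula of \citep{LiMoulinSunZhou2024} without reproving it, and your endpoint checks at \(1/3,\,2/7,\,7/27,\,1/5\) and at the internal breakpoints \(\alpha=1/(2k+3)\) are all correct. One caution about your optional self-contained route: any instance with two items of cost \(\alpha\) admits the split \(\{\alpha,\alpha\}\) versus the rest, of cost \(\max\{2\alpha,1-2\alpha\}\), which is strictly below (5) on \((1/5,2/7)\), so the middle windows are witnessed instead by a single item \(\alpha\) plus four (resp.\ five) equal items of cost \((1-\alpha)/4\) (resp.\ \((1-\alpha)/5\)).
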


\subsection{Four or more agents}

We first prove the comparison required by the recursive prefix step when at
least four agents remain.  Normalize
one remaining agent's current total cost to one.  Her old bound is
\(d=\Delta_n^+(\alpha)\).  A prefix of cost \(C\) is removed, the next item
has cost \(p\), and the residual total is \(T=1-C\).  The crossing condition
is \(C+p>d\).  Tail domination says that the new \((n-1)\)-agent bound
\(T\Delta_{n-1}^+(p/T)\) is at most \(d\).

This comparison does not follow from monotonicity because the exact share is
nonmonotone in its largest-item parameter.  We prove it directly for four or
more agents and then handle the exceptional transition from three agents to
two.

\begin{lemma}[Tail domination for \(n\ge4\)]
\label{lem:ordinary-tail}
Let \(n\ge4\), and consider a nonincreasing normalized sequence
\[
  a_1=\alpha\ge a_2\ge\cdots\ge a_m\ge0.
\]
Put \(d=\Delta_n^+(\alpha)\).  Suppose a prefix of \(s\) items has cost
\(C\), the next item is \(p=a_{s+1}\), and
\[
  C+p>d.
  \tag{6}
\]
If \(T=1-C>0\), then
\[
  T\Delta_{n-1}^+(p/T)\le d.
\]
\end{lemma}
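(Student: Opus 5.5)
The plan is to avoid computing \(\Delta_{n-1}^+(p/T)\) exactly and to use the envelope bound (4) with \(q=n-1\ge3\): for every integer \(h\ge0\),
\[
  T\Delta_{n-1}^+(p/T)=B_{n-1}(T,p)\le G_{n-1,h}(T,p)
  =\max\Bigl\{(h+1)p,\ \tfrac{h+2}{(n-1)(h+1)}(T-p)\Bigr\}.
\]
The whole proof then reduces to choosing \(h\) as a function of the old state \((\alpha,d)\) and of \(p\), so that both branches are at most \(d\). Write \(k=\lfloor(1-\alpha)/(n\alpha)\rfloor\). By Proposition~\ref{prop:qge3formula} (with \(T=1\), \(p=\alpha\)),
\[
  d=\max\Bigl\{(k+1)\alpha,\ \tfrac{k+2}{n(k+1)}(1-\alpha)\Bigr\}.
\]
The two inputs from the hypotheses are as follows.
\begin{itemize}
\item Monotonicity gives \(p\le\alpha\) and \(p\le a_t\) for all \(t\le s\), hence \(C\ge sp\).
\item The crossing (6) gives \(T-p=1-C-p<1-d\).
\end{itemize}

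First I try \(h=k\). The increasing branch is then automatic, since \((k+1)p\le(k+1)\alpha\le d\). The decreasing branch is at most \(\frac{k+2}{(n-1)(k+1)}(1-d)\), so it suffices that
\[
  d\ge \frac{k+2}{(n-1)(k+1)+k+2}.
\]
This holds whenever \(d\) is bounded away from \(1/n\) by a margin of order \(1/(n^2(k+1))\). I would verify it directly in the regime where \(d=(k+1)\alpha\) dominates. In that regime \(\alpha\ge \frac{k+2}{n(k+1)^2+k+2}\), and, since \(\alpha\) lies in the range \(\frac{1}{n(k+1)+1}<\alpha\) coming from the floor, this should give the required margin. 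The case \(k=0\) (large \(\alpha\)) is handled separately and directly, since then \(d\ge\alpha\ge 1/(n+1)\) and the prefix must contain at least one item before the crossing unless \(p\) itself is large.

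In the complementary regime, where \(d=\frac{k+2}{n(k+1)}(1-\alpha)\) is close to \(1/n\), the choice \(h=k\) can fail. Here I switch to a larger index, \(h=k+1\) or more generally \(h=\lfloor d/p\rfloor-1\), chosen so that \((h+1)p\le d\) controls the increasing branch; this is the ``small new largest item'' case. For the decreasing branch I no longer use only \(T-p<1-d\). Instead I also use \(T-p=1-C-p\le 1-(s+1)p\) together with the crossing, so that the residual mass is smaller whenever \(p\) is not tiny. If instead \(p>d/(k+2)\) (a ``large new item''), I return to \(h=k\). In that case I use that the prefix items each have cost at least \(p\) and that \(C+p>d\) forces removal of a definite amount of mass, which tightens \(T-p\) enough to absorb the factor \(\frac{k+2}{(n-1)(k+1)}\) versus \(\frac{k+2}{n(k+1)}\).

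The main obstacle is this middle regime: \(d\) is nearly \(1/n\) and \(p\) is comparable to \(d/(k+1)\). There, neither envelope index wins from the crude bound \(T-p<1-d\) alone, and one must exploit \(C\ge sp\) and the integrality of \(s\). I expect this to be exactly where \(n\ge4\) is used, namely through the inequality \((n-1)(h+1)+h+2\ge n(h+1)+1\) with enough slack, and where the three-to-two transition breaks and has to be deferred to the exceptional lemma. My approach there is to parametrize by the breakpoints \(1/(n(h+1)+1)\) of Proposition~\ref{prop:qge3formula} and to check the resulting linear inequalities in \((p,C)\) at the vertices of the feasible region.
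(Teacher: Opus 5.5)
Your skeleton matches the paper's: you bound $B_{n-1}(T,p)$ by $G_{n-1,h}(T,p)$ and switch at $p=d/(k+2)$ between $h=k+1$ and $h=k$. The two inequalities that actually close these cases are missing, however, and the inputs you list cannot supply one of them.

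In the large-$p$ case you rely only on $C\ge sp$ and $C+p>d$, and these are too weak. Take $n=4$ and $\alpha=1/3$, so $k=0$ and $d=1/3$, and put $s=1$, $C=p=1/6+\epsilon$. All your constraints hold: $p\le\alpha$, $p>d/2$, $C\ge sp$, $C+p>d$. Yet $B_3(1-p,p)=\max\{2p,(1-2p)/2\}=1/3+2\epsilon>d$. What rules this out is that the first removed item is $a_1=\alpha$. Any $k+1$ items cost at most $(k+1)\alpha\le d$, so the crossing forces $s\ge k+1$, and hence $C\ge\alpha+kp$. Then $T-p\le1-\alpha-(k+1)p<1-\alpha-\frac{k+1}{k+2}d\le\frac{(n-1)(k+1)}{k+2}d$. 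The last step uses $1-\alpha\le\frac{n(k+1)}{k+2}d$, which comes from the decreasing branch of $d$. This bounds the decreasing branch of $G_{n-1,k}$ by $d$, whichever branch of $d$ is active.

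In the small-$p$ case with $h=k+1$ you do not need $T-p\le1-(s+1)p$. The crude bound $T-p<1-d$ suffices, but only once you prove $d\ge\frac{k+3}{n(k+2)+1}$. This follows from $d\ge\frac{(k+1)(k+2)}{n(k+1)^2+k+2}$, the common value of the two branches at $\alpha_0=\frac{k+2}{n(k+1)^2+k+2}$, which lies inside the $k$-range of $\alpha$. Cross-multiplying, the difference is $(n-2)k+n-4$. This is exactly where $n\ge4$ is used, and it fails for $n=3$, $k=0$. The relation $(n-1)(h+1)+h+2\ge n(h+1)+1$ that you point to is an identity with no slack.

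Your opening claim is also false: you say $h=k$ together with $T-p<1-d$ works throughout the regime $d=(k+1)\alpha$. That would need $d\ge\frac{k+2}{n(k+1)+1}$, which fails just above $\alpha_0$. For $n=4$, $k=0$, $\alpha\in[1/3,2/5)$, $s=1$ and small $p$, the decreasing branch $\frac23(1-\alpha-p)$ of $G_{3,0}$ exceeds $d=\alpha$. So the $p$-split must be applied in every regime, not only where the decreasing branch of $d$ is active. Once you do that and supply the two inequalities above, the regime split, the separate $k=0$ argument and the vertex check all become unnecessary.
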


\begin{proof}[Proof sketch]
The exact formula has two sources of difficulty: several large items may
force a high load, or too much residual mass may remain to be spread among
the agents.  The two branches of \(G_{n-1,h}\) keep these obstructions
separate.

Let \(k\) be the index of the old state.  A strict crossing cannot occur
among only \(k+1\) items, because each costs at most \(\alpha\) and the old
bound satisfies \(d\ge(k+1)\alpha\).  Hence the removed prefix contains the
largest item and at least \(k\) further items, each no smaller than the next
item \(p\).  This converts the order of the items into a lower bound on the
mass already removed.

We then choose the envelope according to the size of \(p\).  If
\(p\le d/(k+2)\), take \(h=k+1\): the large-item branch is immediately at
most \(d\), while the crossing inequality leaves sufficiently little total
mass for the other branch.  If \(p>d/(k+2)\), take \(h=k\): the large-item
branch is controlled by \(p\le\alpha\), and the \(k\) removed items of size
at least \(p\) force enough mass to have disappeared to control the other
branch.  Thus a larger new maximum is paid for by a smaller residual total.
The exact-formula calculation only verifies that the constants in these two
charges meet at the threshold \(p=d/(k+2)\).  The details appear in
Appendix~\ref{app:ordinary-tail-proof}.
\end{proof}

\subsection{The transition from three agents to two}

The transition from three agents to two uses the exceptional part of the
two-agent formula and requires separate treatment.

\begin{lemma}[Tail domination for \(n=3\)]
\label{lem:exceptional-tail}
The conclusion of Lemma~\ref{lem:ordinary-tail} also holds for \(n=3\).
\end{lemma}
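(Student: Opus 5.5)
We argue as in Lemma~\ref{lem:ordinary-tail}: normalize the agent's total to one, let \(d=\Delta_3^+(\alpha)\) with index \(k=\lfloor(1-\alpha)/(3\alpha)\rfloor\), and use the crossing condition \(C+p>d\). The target is \(T\,d(p/T)\le d\), where \(T=1-C\) and \(d(\cdot)=\Delta_2^+\) is given by Proposition~\ref{prop:q2formula}. The difficulty is that for two agents the envelope family \(G_{2,h}\) is not an upper bound for \(d(\cdot)\) on the exceptional middle ranges \(1/5\le x\le 1/3\). The inequality \(B_2\le G_{2,h}\) in (4) was only established for \(q\ge3\). So we split according to where the new ratio \(x=p/T\) falls.

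\textbf{Case \(x\le1/5\) or \(x\ge 1/3\).} The two-agent formula then has the same two-branch shape as (2). On \(x\ge1/3\) we have \(T d(x)=\max\{p,T-p\}\). The term \(p\le\alpha\le d\) is immediate, and \(T-p<1-d\le d\) holds because \(d\ge 1/2\) fails for three agents. So we argue differently: \(T-p=1-C-p<1-d\), and we need \(1-d\le d\), which is false in general. Instead we use that the prefix contains at least \(k+1\) items each of size at least \(p\), as in the ordinary proof. This gives \(C\ge (k+1)p\), and with \(T-p\) and \(x\ge 1/3\) this forces \(T\le 3p\) and hence \(T-p\le 2p\le\ldots\). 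On \(x\le1/5\) we repeat the two-choice argument from the proof of Lemma~\ref{lem:ordinary-tail} verbatim with \(q=2\): take \(h=k+1\) when \(p\le d/(k+2)\) and \(h=k\) otherwise. We check that the constants (now \(\frac{h+2}{2(h+1)}\)) still balance against \(d=G_{3,k}(1,\alpha)\). The first step here is to confirm that the identity \(d(x)\le g_{2,h}(x)\) holds for all \(h\) on the range \(x\le 1/5\), by the same adjacent-envelope comparison.

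\textbf{Exceptional range \(1/5<x<1/3\).} Here \(d(x)\le\max\{2x,\,3(1-x)/4,\,(2+3x)/5\}\), and in fact \(d(x)\le \max\{2x,\,3(1-x)/4\}\) after checking that the middle piece lies below this maximum. Thus \(T d(x)\le\max\{2p,\,3(T-p)/4\}\). We bound each term by \(d\).
\begin{itemize}
\item For \(2p\le d\): since \(x>1/5\) we have \(T<5p\). Combined with \(C\ge(k+1)p\) and \(C+T=1\), this gives \(p> 1/(k+6)\). This must be reconciled with \(p\le\alpha\) and \(d\ge(k+1)\alpha\). If \(k\ge1\), then \(2p\le2\alpha\le d\) directly. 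If \(k=0\), then \(\alpha>1/4\), \(d=\max\{\alpha,(1-\alpha)\cdot 2/3\}\ge 1/2\cdot\ldots\), and \(2p\le d\) must be verified from \(p\le C\) (the prefix contains \(a_1\ge p\)) and \(C+p\) together with \(T<5p\).
\item For \(3(T-p)/4\le d\): use \(T-p<1-d\). It then suffices that \(3(1-d)/4\le d\), i.e.\ \(d\ge 3/7\). When \(d<3/7\), we must instead use the removed mass \(C\ge(k+1)p\) with \(p>T/5\).
\end{itemize}

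I expect the main obstacle to be these small-\(k\) subcases (\(k=0,1\), i.e.\ \(\alpha\) near \(1/4\)–\(1/7\)) inside the exceptional window. The plan there is a direct case table over \(k\in\{0,1\}\) and the three exceptional pieces of (5). Each piece is a linear inequality in \((C,p)\) over a polygon cut out by \(p\le\alpha\), \(C\ge(k+1)p\), \(C+p>d\), and the window constraint on \(p/(1-C)\). Checking each linear target at the polygon's vertices then finishes the proof.
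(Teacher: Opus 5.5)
Your proposal has genuine gaps, and two of its concrete steps are false. In the window \(1/5<x<1/3\) you replace \(\Delta_2^+(x)\) by \(\max\{2x,3(1-x)/4\}\), asserting that the middle piece lies below this maximum. It does not. That maximum is \(g_{2,1}(x)\), and on \((7/27,2/7)\) the piece \((2+3x)/5\) strictly exceeds it. At \(x=3/11\), both \(2x\) and \(3(1-x)/4\) equal \(30/55\), while \((2+3x)/5=31/55\). This bump is exactly what makes the two-agent formula exceptional, so your bound \(T\Delta_2^+(p/T)\le\max\{2p,3(T-p)/4\}\) is invalid.

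On \(x\le1/5\), the ``verbatim'' two-choice argument does not balance when \(k=0\). The subcase \(p\le d/2\) with \(h=1\) needs \(d\ge(k+3)/(3k+7)=3/7\), but for \(k=0\) one can have \(d=2/5\). Concretely, \(\alpha=C=d=2/5\) and \(p=1/20\) satisfy the crossing with \(x=1/12\), yet \(G_{2,1}(T,p)=33/80>2/5\).

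Your vertex table also uses \(C\ge(k+1)p\) in place of \(C\ge\alpha+kp\). Without \(C\ge\alpha\), the \(k=0\) polygon contains points that violate the target. For example, \(\alpha=d=2/5\), \(C=1/4\), \(p=1/5\) gives \(T\Delta_2^+(p/T)=21/50>2/5\). Finally, the case \(x\ge1/3\) and both \(k=0\) bullets are left unfinished.

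The missing idea is that your starting premise holds only for \(h\le1\). For every \(h\ge2\), \(G_{2,h}(T,p)\) does dominate \(T\Delta_2^+(p/T)\) at all ratios. On \(x\le1/5\) this follows from the adjacent-envelope comparison. For \(x>1/5\) it follows from \(\Delta_2^+(x)\le3x\le(h+1)x\), since \(3x\) dominates \(3(1-x)/4\), \((2+3x)/5\), \(2x\), and \(\max\{x,1-x\}\le1\). In your own split this means that for \(k\ge2\) and \(x>1/5\) you get \(T\Delta_2^+(p/T)\le3p\le(k+1)\alpha\le d\) at once. So split on the old index \(k\), not on the new ratio \(x\).

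For \(k\ge2\), both choices \(h\in\{k,k+1\}\) are at least \(2\). The proof of Lemma~\ref{lem:ordinary-tail} then runs with \(n=3\), because \(d\ge(k+1)(k+2)/(3k^2+7k+5)\ge(k+3)/(3k+7)\) for \(k\ge1\).

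The cases \(k=0,1\) need no envelope; instead exhibit a two-bin split of any tail with parameters \((T,p)\).
For \(k=1\), \(C\ge\alpha+p\), so greedy balancing has maximum load at most \((T+p)/2\le(1-\alpha)/2\le d\).
For \(k=0\), \(T\le1-\alpha\le3d/2\) and \(p\le\alpha\le d\). If \(p>d/2\), isolate the largest item; the other bin has \(T-p<d\). Otherwise greedy gives \((T+p)/2\le d\).
These splits use only \((T,p)\), so they bound \(T\Delta_2^+(p/T)\) itself.
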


\begin{proof}[Proof sketch]
The transition to two agents has the same structure as the preceding lemma;
the only new issue is that the exact two-agent formula has several extra
ranges.  For \(k\ge2\), all of those ranges still lie below the same
two-branch envelope \(G_{2,h}\) for every \(h\ge2\).  Once this domination
is established, the identical switch at \(p=d/(k+2)\) applies: a small
\(p\) directly controls the large-item branch, whereas a large \(p\)
certifies that the removed prefix carried away enough mass.

The indices \(k=0\) and \(k=1\) correspond to instances with a genuinely
large maximum item.  There the tail can be handled directly by putting its
largest item alone when necessary and otherwise balancing greedily between
two bundles.  Thus the exceptional pieces of the two-agent formula do not
create a new recursive phenomenon; they are used only to certify the common
envelope at the terminal transition.  The detailed verification appears in
Appendix~\ref{app:exceptional-tail-proof}.
\end{proof}

\begin{corollary}
\label{cor:scaled-tail}
For an unnormalized ordered suffix with total \(T_0>0\), maximum \(p_0\),
and current bound \(B_q(T_0,p_0)\), remove a prefix of cost \(C\).  If the
removed prefix plus the next item exceeds \(B_q(T_0,p_0)\), then the
\((q-1)\)-agent Hill bound of the residual tail is at most the old bound.
This holds for every \(q\ge3\), using
Lemma~\ref{lem:ordinary-tail} when \(q\ge4\) and
Lemma~\ref{lem:exceptional-tail} when \(q=3\).
\end{corollary}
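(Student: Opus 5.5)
The plan is to reduce the corollary to Lemmas~\ref{lem:ordinary-tail} and \ref{lem:exceptional-tail} by rescaling. The only facts needed are that the scaled Hill bound is positively homogeneous of degree one and that the crossing condition is scale invariant.

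\emph{Setup and normalization.} Consider the ordered suffix $b_1\ge b_2\ge\cdots$, with total $T_0>0$ and maximum $p_0=b_1$. The removed prefix is $b_1,\ldots,b_s$, with cost $C$, and the next item is $p=b_{s+1}$. Assume the residual total $T=T_0-C$ is positive; otherwise the tail is all zero and $B_{q-1}(0,0)=0$ makes the claim trivial. Set
\[
  a_t=b_t/T_0 .
\]
This sequence is nonincreasing and normalized, and its first term is $\alpha=p_0/T_0$. By \eqref{eq:scaled-hill},
\[
  B_q(T_0,p_0)=T_0\Delta_q^+(\alpha)=T_0 d .
\]
The hypothesis $C+p>B_q(T_0,p_0)$, divided by $T_0$, becomes $C/T_0+p/T_0>d$. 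This is exactly condition (6) for the normalized prefix cost $C'=C/T_0$ and next item $p'=p/T_0$. Moreover $T'=1-C'=T/T_0>0$.

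\emph{Invoking the lemmas.} If $q\ge4$, apply Lemma~\ref{lem:ordinary-tail} with $n=q$. If $q=3$, apply Lemma~\ref{lem:exceptional-tail}. Either way we obtain
\[
  T'\Delta_{q-1}^+(p'/T')\le d .
\]
Multiplying by $T_0$ and using $p'/T'=p/T$ gives
\[
  T\Delta_{q-1}^+(p/T)=B_{q-1}(T,p)\le B_q(T_0,p_0).
\]
The left side is precisely the $(q-1)$-agent Hill bound of the residual tail, since its largest item is $p$ (the suffix is nonincreasing) and its total is $T$.

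\emph{Degenerate cases.} If $p=0$ while $T>0$, the tail would be all zero, which contradicts $T>0$. So $p>0$, and $p/T\in(0,1]$ lies in the domain of $\Delta_{q-1}^+$. Similarly, $T_0>0$ forces $p_0>0$.

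The argument is essentially bookkeeping. The main care point is matching the lemmas' hypotheses exactly: $\alpha$ must be the first term of the normalized sequence, and the residual maximum must coincide with the next item. Both hold because the suffix is sorted.
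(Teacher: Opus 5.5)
Your proposal is correct and is the intended argument. The paper gives no separate proof of this corollary: it relies on the normalization described just before Lemma~\ref{lem:ordinary-tail}, namely dividing by the current total, applying the lemma with $n=q$, and scaling back through \eqref{eq:scaled-hill}, which is exactly what you do. Your explicit treatment of the degenerate cases ($T=0$, and deducing $p>0$ and $p_0>0$) is a small addition the paper leaves implicit, since in the algorithm those cases are already excluded by the early rules.
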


\section{Polynomial two-agent terminal routine}
\label{sec:two-agent-base}

The prior work already proves the exact two-agent guarantee by
divide-and-choose \citep{LiMoulinSunZhou2024}.  The divider computes
an exact minimax partition, so both bundles meet her exact Hill's share; the
other agent chooses her cheaper bundle, whose cost is at most half of her
residual total.  This establishes existence, but computing the exact
partition can encode \textsc{Partition} \citep{GareyJohnson1979}.  Our
contribution in this section is that we replace the exact minimax computation by a polynomial-time trimmed
subset-sum routine.

\subsection{The feasible interval}

Suppose the divider assigns nonnegative costs
\(c_1,\ldots,c_s\) to the residual items.  Let \(T>0\) be their total cost, \(p\) be the largest item cost, and $D=B_2(T,p)$
be her exact Hill's share.  A subset of cost \(x\) defines two bundles of
costs \(x\) and \(T-x\).  Both costs are at most \(D\) exactly when
\[
  x\in[L,U]=[T-D,D].
\]
Rather than computing the exact minimax partition, we only need to hit the
interval \([L,U]\).  The exact two-agent formula shows that this interval has
enough width to permit polynomial trimming.

To quantify the available slack, normalize by \(T\), put
\(\alpha=p/T\), and write
\[
  w(\alpha)=2d(\alpha)-1.
\]
This is the width of the normalized feasible interval.  
Although the exact formula is
known, the following width bound and the trimmed construction are new.

\begin{lemma}
\label{lem:width}
If \(0<\alpha\le1/3\), then
\[
  w(\alpha)\ge\frac37\alpha.
\]
\end{lemma}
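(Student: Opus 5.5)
The plan is a direct case analysis over the ranges of Proposition~\ref{prop:q2formula}. In each range $d(\alpha)$ is either affine in $\alpha$ or a maximum of affine functions. Hence $w(\alpha)-\tfrac37\alpha$ is piecewise linear, or at least bounded below by a piecewise linear function, so it suffices to check the inequality at the endpoints of each affine piece. A maximum can be handled by using whichever branch gives the lower bound we need.

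\textbf{Step 1: the explicit ranges with $\alpha\ge 1/5$.} For $2/7\le\alpha\le1/3$ we have $w=4\alpha-1$. At $\alpha=2/7$ this is $1/7$, while $\tfrac37\cdot\tfrac27=6/49<7/49$; at $\alpha=1/3$ it is $1/3>1/7$. Linearity on the interval then gives the bound. For $7/27\le\alpha\le2/7$ we have $w=(6\alpha-1)/5$. At $\alpha=7/27$ this equals $(42/27-1)/5=1/9$, against $\tfrac37\cdot\tfrac7{27}=1/9$, so equality holds there. At $2/7$ it equals $1/7$, against $6/49$. Both sides are linear, so the bound holds. For $1/5\le\alpha\le7/27$ we have $w=(1-3\alpha)/2$, which is decreasing. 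At $7/27$ it is again $1/9$, with equality. Since the right-hand side $\tfrac37\alpha$ is increasing, the inequality holds on the whole interval, and at $1/5$ we get $1/5>3/35$. The endpoint $\alpha=7/27$ is therefore tight, and the constant $3/7$ is exactly what this tight point forces.

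\textbf{Step 2: the range $0<\alpha\le1/5$.} Here $k=\lfloor(1/\alpha-1)/2\rfloor\ge2$, so $1/(2k+3)<\alpha\le1/(2k+1)$. Using the second branch of the maximum gives
\[
  d(\alpha)\ \ge\ \frac{k+2}{2(k+1)}(1-\alpha),
  \qquad\text{so}\qquad
  w(\alpha)\ \ge\ \frac{1}{k+1}-\frac{k+2}{k+1}\,\alpha .
\]
The target $w\ge\tfrac37\alpha$ then reduces to
\[
  \alpha\le\frac{7}{7(k+2)+3(k+1)}=\frac{7}{10k+17}.
\]
We have $\alpha\le1/(2k+1)$, and $1/(2k+1)\le 7/(10k+17)$ is equivalent to $10k+17\le14k+7$, that is, $k\ge 5/2$. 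So this branch settles $k\ge3$ immediately.

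\textbf{Step 3: the main obstacle, $k=2$.} This means $1/7<\alpha\le1/5$. Here the first branch $d\ge3\alpha$ must be used near the top of the interval, giving $w\ge6\alpha-1$. This is at least $\tfrac37\alpha$ whenever $\alpha\ge7/39$. On $(1/7,\,7/39]$, the second-branch bound from Step 2 applies as long as $\alpha\le 7/37$, and $7/39<7/37$, so the two branches together cover the whole range.

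For general $k$ I would likewise keep both branches available as a fallback. Even if the arithmetic in Step 2 is slightly off, taking the maximum of the two bounds $w\ge 2(k+1)\alpha-1$ and $w\ge \frac{1}{k+1}-\frac{k+2}{k+1}\alpha$ handles every window $\bigl(1/(2k+3),\,1/(2k+1)\bigr]$. The worst point is where the two bounds cross, which is the kink $\alpha=\frac{k+2}{2(k+1)^2+k+2}$. There the common value of $w$ is of order $1/(k+1)$, comfortably larger than $\tfrac37\alpha\approx\tfrac{3}{7\cdot 2k}$.
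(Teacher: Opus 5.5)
Your proof is correct and follows essentially the same route as the paper. Both use a case analysis over the ranges of Proposition~\ref{prop:q2formula}, with linear endpoint checks on the three explicit ranges above $1/5$ and the tight point $\alpha=7/27$ forcing the constant $3/7$. The only difference is for $\alpha\le1/5$, where the paper states exactly what your closing remark gestures at instead of splitting into $k\ge3$ and $k=2$: after dividing by $\alpha$, the two branches of $w(\alpha)/\alpha$ are monotone in opposite directions and cross at the value $k/(k+2)\ge1/2$, so the window $\alpha\le1/(2k+1)$ is never needed. Your Steps~2--3 are rigorous on their own, so the loosely worded fallback paragraph can simply be dropped.
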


\begin{proof}
In the range \(0<\alpha\le1/5\), let \(k\ge2\) be defined as in
Proposition~\ref{prop:q2formula}.  Dividing \(2d(\alpha)-1\) by \(\alpha\)
gives
\[
  \frac{w(\alpha)}{\alpha}
  =
  \max\left\{
    2(k+1)-\frac1\alpha,\
    \frac{1/\alpha-k-2}{k+1}
  \right\}.
\]
The first expression increases with \(\alpha\), the second decreases, and
their common value at the intersection is
\[
  \frac{k}{k+2}\ge\frac12.
\]

Across the other three ranges below \(1/3\), the corresponding ratios are
\[
  \frac{1/\alpha-3}{2},\qquad
  \frac{6-1/\alpha}{5},\qquad
  4-\frac1\alpha.
\]
Their minima on their respective intervals are
\(3/7,3/7,1/2\).  The value \(3/7\) is attained at
\(\alpha=7/27\).
\end{proof}

Returning to the divider's original scale, let \(W=U-L=2D-T\).  When
\(p/T\le1/3\), scaling Lemma~\ref{lem:width} and using \(p\ge T/s\) yields
\[
  W\ge\frac37p\ge\frac{3T}{7s}.
  \tag{7}
\]
The feasible interval is therefore an inverse-polynomial fraction of the
total cost.  This width is the source of the polynomial state bound.

If \(p/T\ge1/3\), put one maximum item in the first bundle and all remaining
items in the second.  Formula (5) gives
\[
  D=\max\{p,T-p\},
\]
so both bundles are feasible.

\subsection{One-sided trimmed subset sum}

It remains to treat \(p/T<1/3\).  Define
\[
  \eta=\frac{W}{2s},
  \tag{8}
\]
and place every exact sum \(z\) in bucket
\[
  \bucket(z)=\left\lfloor\frac{z}{\eta}\right\rfloor.
\]
An ordinary subset-sum dynamic program may contain exponentially many exact
sums.  We keep only one exact representative per bucket.  A feasible sum
below \(T/2\) must be approximated upward, whereas one above \(T/2\) must be
approximated downward.  We therefore run the following dynamic program
twice, retaining respectively the largest and the smallest representative
of every bucket.

\begin{algorithm}[H]
\caption{One-sided trimmed subset sum}\label{alg:trim}
\begin{algorithmic}[1]
\Procedure{Trim}{$\mathit{direction}$}
  \State $\mathcal T\gets\{0\mapsto0\}$
  \For{each item cost $c$}
    \State $\mathcal C\gets
      \{z,z+c:z\in\operatorname{image}(\mathcal T)\}$
    \State $\mathcal T'\gets\varnothing$
    \For{each bucket $b$ occupied by $\mathcal C$}
      \If{$\mathit{direction}=\mathrm{MAX}$}
        \State $z_b\gets\max\{z\in\mathcal C:\bucket(z)=b\}$
      \Else
        \State $z_b\gets\min\{z\in\mathcal C:\bucket(z)=b\}$
      \EndIf
      \State Store $z_b$ in $\mathcal T'[b]$, together with its predecessor
        and include/exclude bit
    \EndFor
    \State $\mathcal T\gets\mathcal T'$
  \EndFor
  \State \Return $\mathcal T$
\EndProcedure
\end{algorithmic}
\end{algorithm}

Scan both final tables.  If a stored exact sum belongs to \([L,U]\), follow
its predecessor pointers to reconstruct the subset.
The next lemma formalizes the two one-sided errors while ensuring that every
retained value remains an actual subset sum.

\begin{lemma}
\label{lem:trim-invariants}
After processing \(t\) items:
\begin{enumerate}
  \item for every exactly reachable sum \(x\), the MAX table contains an
  actual subset sum \(y\) with
  \[
    x\le y\le x+t\eta;
  \]
  \item for every exactly reachable sum \(x\), the MIN table contains an
  actual subset sum \(z\) with
  \[
    x-t\eta\le z\le x.
  \]
\end{enumerate}
\end{lemma}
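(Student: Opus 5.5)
We prove both invariants by induction on \(t\), treating the MAX table in detail; the MIN table follows by the mirror-image argument with all inequalities reversed. Throughout, we use that every value stored in a table is built as \(z\) or \(z+c\) from a stored value \(z\), with predecessor pointers recorded. A separate trivial induction therefore shows that every stored value is an actual subset sum of the processed items. The base case \(t=0\) holds because the only reachable sum is \(0\), and the table contains exactly \(0\) with error \(0=0\cdot\eta\).

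For the inductive step, suppose the claim holds after \(t-1\) items and let the \(t\)-th cost be \(c\). Any sum \(x\) reachable after \(t\) items is either \(x'\) or \(x'+c\) for some sum \(x'\) reachable after \(t-1\) items. By hypothesis, the old MAX table contains an actual subset sum \(y'\) with \(x'\le y'\le x'+(t-1)\eta\). The candidate set \(\mathcal C\) then contains \(u=y'\) in the first case and \(u=y'+c\) in the second. In both cases \(u\) is an actual subset sum with
\[
x\le u\le x+(t-1)\eta .
\]

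Now \(u\) lies in bucket \(b=\bucket(u)\), and the MAX rule stores \(y=z_b=\max\{z\in\mathcal C:\bucket(z)=b\}\). Hence \(y\ge u\ge x\). Since \(y\) and \(u\) share a bucket of width \(\eta\), we also have \(y<u+\eta\le x+t\eta\). This closes the induction for the MAX table. For the MIN table, the stored representative \(z\) of the bucket of the corresponding \(u\) satisfies \(z\le u\le x\) and \(z>u-\eta\ge x-t\eta\), using the hypothesis \(x'-(t-1)\eta\le z'\le x'\).

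The only point that needs care is that the error stays one-sided. Adding the same \(c\) to both \(x'\) and \(y'\) preserves the signed gap exactly. Bucket replacement then moves the value only in the chosen direction, by less than one bucket width, so the errors accumulate additively to at most \(t\eta\). No other step is delicate; the argument is a standard rounding-error induction.
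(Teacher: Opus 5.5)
Your proof is correct and follows essentially the same route as the paper. Both argue by induction on \(t\): the stored representative of the predecessor sum is lifted to the candidate \(y'\) or \(y'+c\), and one-sided bucket replacement then moves the value in the chosen direction by less than \(\eta\), with the MIN case symmetric. Your version simply writes out the inequalities that the paper's sketch leaves implicit.
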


\begin{proof}
The claim is immediate for \(t=0\).  For the MAX induction step, represent
the predecessor of \(x\), add the current item if \(x\) uses it, and consider
the resulting candidate.  Keeping the maximum candidate in its bucket can
only move the value upward, by less than one bucket width \(\eta\).  This
adds at most \(\eta\) to the accumulated error.  The MIN proof is symmetric.
Every retained number remains an exact subset sum.
\end{proof}

\begin{lemma}
\label{lem:trim-finds}
At least one final table contains a stored exact sum in \([L,U]\).
\end{lemma}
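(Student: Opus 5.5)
The plan is to first show that some exactly reachable subset sum lies in the middle part of \([L,U]\), and then use Lemma~\ref{lem:trim-invariants} to bound how far the trimmed tables can drift from it. Since \(D=B_2(T,p)\ge\MMS_2\) for the divider's residual costs, there exists an exact partition whose larger bundle costs at most \(D\). Hence some exactly reachable sum \(x^\ast\) satisfies \(x^\ast\le T/2\) and \(x^\ast\ge T-D=L\); we take the smaller bundle of a minimax partition. By complementation, \(T-x^\ast\) is also reachable and lies in \([T/2,U]\). The interval is symmetric about \(T/2\), namely \([L,U]=[T/2-W/2,\,T/2+W/2]\).

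The key case split is on where \(x^\ast\) sits relative to \(T/2\).
\begin{itemize}
\item If \(x^\ast\le T/2\), apply part~1 of Lemma~\ref{lem:trim-invariants} after all \(t=s\) items. The MAX table contains an actual subset sum \(y\) with \(x^\ast\le y\le x^\ast+s\eta=x^\ast+W/2\). Since \(x^\ast\ge L\) gives \(y\ge L\), and \(x^\ast\le T/2\) gives \(y\le T/2+W/2=U\), we get \(y\in[L,U]\).
\item Symmetrically, starting from \(T-x^\ast\ge T/2\), part~2 gives an element \(z\) of the MIN table with \(T/2-W/2\le z\le T-x^\ast\le U\).
\end{itemize}
In fact the MAX table alone suffices, because we can always choose the representative \(x^\ast\le T/2\). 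Both tables are kept only for robustness.

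The one point to check is that the error bound \(s\eta\) equals exactly \(W/2\) by the choice \eqref{eq:scaled-hill}-independent definition \(\eta=W/(2s)\) in (8). This makes the one-sided drift exactly the half-width of the interval, which is the intended design. The main obstacle is only the existence of the anchoring sum \(x^\ast\). I will justify it by the definition of \(\Delta_2^+\) as a supremum of minimax shares: the divider's normalized valuation lies in \(\mathcal V(p/T)\), so \(\MMS_2\le T\,\Delta_2^+(p/T)=D\).

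A small degenerate issue also needs care. If \(W=0\), the bucket width is undefined. However, in the case \(p/T<1/3\), inequality (7) gives \(W>0\), so \(\eta>0\) and the argument goes through.
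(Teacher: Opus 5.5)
Your proof is correct and follows the paper's argument: you anchor on a subset sum from a partition witnessing \(\MMS_2(c)\le D\), then use the drift bound \(s\eta=W/2\) from Lemma~\ref{lem:trim-invariants} on the appropriate side of \(T/2\). Your extra remark is also valid and slightly streamlines the paper's two-case split: by complementation you can always take the anchor \(x^\ast\in[L,T/2]\), so the MAX table alone already suffices.
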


\begin{proof}
For the given valuation,
\[
  \MMS_2(c)\le\Delta_2^+(p/T)\,T=D.
\]
Thus some true partition has maximum load at most \(D\), and one of its
subset sums \(x\) belongs to \([L,U]\).  This fact follows directly from the
definition of the worst-case Hill's share; it does not use the simultaneous
allocation theorem.

If \(x\le T/2\), the MAX invariant at \(t=s\) gives
\[
  L\le x\le y
  \le x+\frac{W}{2}
  \le\frac{T}{2}+\frac{W}{2}
  =U.
\]
If \(x\ge T/2\), the MIN invariant gives
\[
  L=\frac{T}{2}-\frac{W}{2}
  \le x-\frac{W}{2}
  \le z\le x\le U.
\]
This includes the case in which the only feasible witness lies exactly at an
endpoint.
\end{proof}

\begin{proposition}
\label{prop:two-bin-time}
The two-bin construction uses \(O(s^3)\) exact rational operations,
\(O(s^2)\) rolling working states, and \(O(s^3)\) predecessor words.
\end{proposition}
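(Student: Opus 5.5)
The plan is to bound the number of occupied buckets per table, then multiply by the number of items and the number of passes. Every stored value is an actual subset sum, so it lies in \([0,T]\). Its bucket index \(\lfloor z/\eta\rfloor\) therefore lies in \(\{0,1,\ldots,\lfloor T/\eta\rfloor\}\).

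By definition (8), \(T/\eta = 2sT/W\). The width bound (7), available because this routine is only run when \(p/T<1/3\), gives \(W\ge 3T/(7s)\). Hence
\[
  \frac{T}{\eta}\le \frac{2s\cdot 7s}{3}=\frac{14}{3}s^2 ,
\]
so each table holds at most \(O(s^2)\) entries at any time. This is the one substantive step: it is where Lemma~\ref{lem:width} is used, and the rest is bookkeeping.

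Next we count the work per item. Processing one item cost \(c\) forms the candidate multiset \(\mathcal C\) of size at most twice the current table, so \(O(s^2)\) values. It computes one bucket index per candidate (one division and a floor) and keeps the max or min per bucket. If the table is an array indexed by bucket, of length \(O(s^2)\), each candidate is handled in \(O(1)\) exact rational operations, with no sorting. Over \(s\) items this gives \(O(s^3)\) operations per pass. There are two passes (MAX and MIN), plus a final scan of \(O(s^2)\) entries against \([L,U]\). Computing \(T\), \(p\), \(D\) from Proposition~\ref{prop:q2formula}, and \(W\) and \(\eta\) costs \(O(s)\). The total is \(O(s^3)\).

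For space, only the current and next tables need to be kept as working states, giving \(O(s^2)\) rolling storage. Reconstruction, however, requires for every layer \(t\) and every occupied bucket a predecessor bucket and an include/exclude bit. That is \(O(s^2)\) words per layer over \(s\) layers, so \(O(s^3)\) predecessor words. Backtracking from the found entry then takes \(O(s)\) steps.

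A minor point to address is the size of the rational numbers. Every stored value is an actual subset sum of the input costs, and \(\eta\) is a fixed rational computed once. Bucket indices are integers at most \(O(s^2)\). Consequently the bit-lengths stay polynomial and each operation counts as one exact rational operation, as the statement claims.
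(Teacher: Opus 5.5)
Your proposal is correct and follows essentially the same route as the paper: you use (7) and (8) to bound the number of buckets per layer by \(T/\eta+1=2sT/W+1\le\tfrac{14}{3}s^2+1\), multiply by the \(s\) layers and two passes, and keep only two layers as rolling state. Your additional bookkeeping on array indexing and polynomial bit-lengths is consistent with the paper's remark that the state bound does not depend on the input's denominators or magnitude.
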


\begin{proof}
By (7) and (8),
\[
  \frac{T}{\eta}+1
  =
  \frac{2sT}{W}+1
  \le
  \frac{14}{3}s^2+1.
\]
Hence each layer has \(O(s^2)\) buckets and each table has \(s\) layers.
The buckets can be stored in a deterministic array of length
\(\lfloor T/\eta\rfloor+1\), or in a deterministic balanced map.  The state
bound is independent of a common denominator or the numerical magnitude of
the input.
\end{proof}

\subsection{Divide-and-choose}

We can now state the two-agent allocation formally.

\begin{algorithm}[H]
\caption{Polynomial divide-and-choose}\label{alg:two-agent}
\begin{algorithmic}[1]
\Require Two agents and a common ordered item set
\Ensure An allocation of the ordered item set to the two agents
\State Choose either agent as the divider
\State Run Algorithm~\ref{alg:trim} on the divider's costs in both directions
\State Scan the two final tables for a sum $z\in[L,U]$
\State Reconstruct the corresponding subset $P$ and let $Q$ be its complement
\State Let the other agent choose her cheaper bundle from $P$ and $Q$
\State Give the divider the remaining bundle
\end{algorithmic}
\end{algorithm}

The divider is safe because both bundles meet her bound.  The chooser pays
at most half of her total cost for the item set, and
\[
  \Delta_2^+(\alpha)\ge\frac12
\]
for every \(\alpha\).  Therefore Algorithm~\ref{alg:two-agent} will
serve as the polynomial base of the recursive allocation defined in
Section~\ref{sec:algorithm}.

\section{Proof of the main theorem}
\label{sec:algorithm}

We now combine the preceding parts in one constructive induction.  We work
on the ordered instance, repeatedly assign the safe prefix described in
Section~\ref{sec:recursive-step}, and use tail domination to protect every
remaining agent.  The polynomial two-agent routine is the base case.  The
same induction proves existence, correctness, and termination of the
algorithm.

\begin{proof}[Proof of Theorem~\ref{thm:main}]
Sort every agent's costs into the ordered instance and precompute suffix sums
\[
  S_i(r)=\sum_{t=r}^{m}w_i(t).
\]
The recursive call \(\textsc{Allocate}(A,\ell)\) receives an active agent set
\(A\), \(q=|A|\), and the common rank suffix
\(\{\ell,\ldots,m\}\).

The invariant gives the meaning of a recursive call.  Every active agent
\(i\) must ultimately receive cost at most the exact \(q\)-agent Hill bound
of her current suffix:
\[
  B_q(S_i(\ell),w_i(\ell)).
\]
The following algorithm implements the prefix step while preserving this
invariant.\par\smallskip

\begin{algorithm}[H]
\caption{Recursive allocation on the ordered instance}
\label{alg:ordered-allocation}
\begin{algorithmic}[1]
\Procedure{Allocate}{$A,\ell$}
  \If{$\ell>m$}
    \State \Return
  \ElsIf{$\exists i\in A$ such that $S_i(\ell)=0$}
    \State Assign the entire suffix to such an $i$
    \State \Return
  \ElsIf{$\exists i\in A$ such that $w_i(\ell)=S_i(\ell)$}
    \State Assign the entire suffix to such an $i$
    \State \Return
  \ElsIf{$|A|=2$}
    \State Apply Algorithm~\ref{alg:two-agent} to
      $A$ and $\{\ell,\ldots,m\}$
    \State \Return
  \EndIf
  \State $q\gets|A|$
  \For{each $i\in A$}
    \State $T_i\gets S_i(\ell)$ and $p_i\gets w_i(\ell)$
    \State $k_i\gets\lfloor(T_i-p_i)/(qp_i)\rfloor$
    \State $B_i\gets
      \max\{(k_i+1)p_i,(k_i+2)(T_i-p_i)/(q(k_i+1))\}$
    \State $r_i\gets
      \min\{r\ge\ell:\sum_{t=\ell}^{r}w_i(t)>B_i\}$
  \EndFor
  \State Choose $a\in\arg\max_{i\in A}r_i$
  \State Assign ranks $\ell,\ldots,r_a-1$ to $a$
  \State \Call{Allocate}{$A\setminus\{a\},r_a$}
\EndProcedure
\State Run \Call{Allocate}{$\{1,\ldots,n\},1$}
\State Lift ranks $m,m-1,\ldots,1$ back to the original items
\end{algorithmic}
\end{algorithm}

The crossing comparison is strict: the allocated prefix stops immediately
before the first item that would violate the current bound.  Since ordered prefix costs are
nondecreasing, each \(r_i\) is found by binary search using suffix sums.
The two early rules are valid because \(B_q(0,0)=0\) and
\(B_q(T,T)=T\).  Otherwise \(0<p_i<T_i\), and for \(q\ge3\) both branches
of the exact formula (2) are strictly smaller than \(T_i\).  Hence the
whole suffix crosses \(B_i\), so every \(r_i\) exists.

We prove the invariant by induction on the number \(q\) of active agents.
With \(q=2\), either an
early rule is immediately safe or Algorithm~\ref{alg:two-agent} applies.
Suppose \(q\ge3\).
The selected agent \(a\) receives the prefix ending immediately before her
first strict crossing, so her cost is at most \(B_a\).  For each survivor
\(j\), maximality of \(r_a\) implies \(r_j\le r_a\); hence the prefix through
rank \(r_a\) exceeds \(B_j\).

The next recursive call first handles two degenerate suffixes.  If some
active agent has residual total zero, the first early rule gives her the
suffix at zero cost and gives everyone else nothing.  If some active agent
say \(j\) has positive residual total equal to its first item, the second
early rule gives her the suffix at cost
\[
  S_j(r_a)=w_j(r_a)\le w_j(\ell)\le B_j,
\]
and all other survivors receive nothing.

Otherwise every survivor has
\[
  0<w_j(r_a)<S_j(r_a).
\]
The strict crossing is exactly the hypothesis of
Corollary~\ref{cor:scaled-tail}, so the survivor's new
\((q-1)\)-agent bound is no larger than \(B_j\).  The induction hypothesis
therefore applies to the residual suffix and proves the invariant.

At the top call, \(S_i(1)=1\) and \(w_i(1)=\alpha_i\), so the invariant gives
the exact-maximum Hill bound for every agent.  Finally,
Lemma~\ref{lem:lifting} returns to the original items without increasing any
cost.  This proves the allocation claim.

It remains to bound the running time.  Sorting all \(n\) rows uses $O(nm\log m)$ comparisons, and suffix sums use \(O(nm)\) additions.  At a
recursive level with \(q\) active agents, thresholds and binary searches use
\(O(q\log m)\) exact operations.  The recursion is a single chain, with
one agent removed at every nonterminal level, so
\[
  \sum_{q=3}^{n}O(q\log m)
  =
  O(n^2\log m).
\]
The two-agent base uses \(O(m^3)\) operations by
Proposition~\ref{prop:two-bin-time}, and identical order reduction uses \(O(nm)\).
Hence the total running time is $O(m^3)$.
\end{proof}


\section{Conclusion}

The exact-maximum Hill's share is a simultaneous guarantee for every number of
agents and arbitrary heterogeneous exact maxima, resolving the open problem
in \citep{LiMoulinSunZhou2024}.  The same constructive induction computes the
allocation in polynomial time.  Tail domination makes the nonmonotone exact
share survive recursive prefix removal, and one-sided trimming prevents the
two-agent base from requiring an exact solution to \textsc{Partition}.  The
corresponding exact-maximum question for goods, as well as Pareto-efficient
and incentive-compatible variants for bads, remains open.

\section*{Acknowledgments}
OpenAI Codex assisted with proof development and verification.  The authors
independently checked all arguments and assume full responsibility for the
content of this manuscript.

\bibliographystyle{plainnat}
\bibliography{references}

@article{LiMoulinSunZhou2024,
  author  = {Bo Li and Herv{\'e} Moulin and Ankang Sun and Yu Zhou},
  title   = {On Hill's Worst-Case Guarantee for Indivisible Bads},
  journal = {ACM Transactions on Economics and Computation},
  volume  = {12},
  number  = {4},
  pages   = {14:1--14:27},
  year    = {2024},
  doi     = {10.1145/3703845}
}

@article{Hill1987,
  author  = {Theodore P. Hill},
  title   = {Partitioning General Probability Measures},
  journal = {The Annals of Probability},
  volume  = {15},
  number  = {2},
  pages   = {804--813},
  year    = {1987}
}

@article{DemkoHill1988,
  author  = {Stephen Demko and Theodore P. Hill},
  title   = {Equitable Distribution of Indivisible Objects},
  journal = {Mathematical Social Sciences},
  volume  = {16},
  number  = {2},
  pages   = {145--158},
  year    = {1988},
  doi     = {10.1016/0165-4896(88)90047-9}
}

@article{Steinhaus1949,
  author  = {Hugo Steinhaus},
  title   = {Sur la division pragmatique},
  journal = {Econometrica},
  volume  = {17},
  number  = {Supplement},
  pages   = {315--319},
  year    = {1949}
}

@article{Budish2011,
  author  = {Eric Budish},
  title   = {The Combinatorial Assignment Problem: Approximate Competitive
             Equilibrium from Equal Incomes},
  journal = {Journal of Political Economy},
  volume  = {119},
  number  = {6},
  pages   = {1061--1103},
  year    = {2011}
}

@article{AzizMoulinSandomirskiy2020,
  author  = {Haris Aziz and Herv{\'e} Moulin and Fedor Sandomirskiy},
  title   = {A Polynomial-Time Algorithm for Computing a Pareto Optimal and
             Almost Proportional Allocation},
  journal = {Operations Research Letters},
  volume  = {48},
  number  = {5},
  pages   = {573--578},
  year    = {2020},
  doi     = {10.1016/j.orl.2020.07.005}
}

@inproceedings{LiptonMarkakisMosselSaberi2004,
  author    = {Richard J. Lipton and Evangelos Markakis and Elchanan Mossel and
               Amin Saberi},
  title     = {On Approximately Fair Allocations of Indivisible Goods},
  booktitle = {Proceedings of the 5th ACM Conference on Electronic Commerce},
  pages     = {125--131},
  publisher = {ACM},
  year      = {2004}
}

@article{CaragiannisEtAl2019,
  author  = {Ioannis Caragiannis and David Kurokawa and Herv{\'e} Moulin and
             Ariel D. Procaccia and Nisarg Shah and Junxing Wang},
  title   = {The Unreasonable Fairness of Maximum Nash Welfare},
  journal = {ACM Transactions on Economics and Computation},
  volume  = {7},
  number  = {3},
  pages   = {12:1--12:32},
  year    = {2019}
}

@inproceedings{BabaioffEzraFeige2022,
  author    = {Moshe Babaioff and Tomer Ezra and Uriel Feige},
  title     = {On Best-of-Both-Worlds Fair-Share Allocations},
  booktitle = {Web and Internet Economics},
  series    = {Lecture Notes in Computer Science},
  volume    = {13778},
  pages     = {237--255},
  publisher = {Springer},
  year      = {2022}
}

@article{Moulin2019,
  author  = {Herv{\'e} Moulin},
  title   = {Fair Division in the Internet Age},
  journal = {Annual Review of Economics},
  volume  = {11},
  pages   = {407--441},
  year    = {2019},
  doi     = {10.1146/annurev-economics-080218-025559}
}

@article{AmanatidisEtAl2023,
  author  = {Georgios Amanatidis and Haris Aziz and Georgios Birmpas and
             Aris Filos-Ratsikas and Bo Li and Herv{\'e} Moulin and
             Alexandros A. Voudouris and Xiaowei Wu},
  title   = {Fair Division of Indivisible Goods: Recent Progress and Open
             Questions},
  journal = {Artificial Intelligence},
  volume  = {322},
  pages   = {103965},
  year    = {2023},
  doi     = {10.1016/j.artint.2023.103965}
}

@inproceedings{AzizRaucheckerSchryenWalsh2017,
  author    = {Haris Aziz and Gerhard Rauchecker and Guido Schryen and
               Toby Walsh},
  title     = {Algorithms for Max-Min Share Fair Allocation of Indivisible
               Chores},
  booktitle = {Proceedings of the Thirty-First AAAI Conference on Artificial
               Intelligence},
  pages     = {335--341},
  year      = {2017}
}

@inproceedings{FeigeSapirTauber2021,
  author    = {Uriel Feige and Ariel Sapir and Laliv Tauber},
  title     = {A Tight Negative Example for MMS Fair Allocations},
  booktitle = {Web and Internet Economics},
  series    = {Lecture Notes in Computer Science},
  volume    = {13112},
  pages     = {355--372},
  publisher = {Springer},
  year      = {2021}
}

@inproceedings{MarkakisPsomas2011,
  author    = {Evangelos Markakis and Christos-Alexandros Psomas},
  title     = {On Worst-Case Allocations in the Presence of Indivisible Goods},
  booktitle = {Web and Internet Economics},
  series    = {Lecture Notes in Computer Science},
  volume    = {7090},
  pages     = {278--289},
  publisher = {Springer},
  year      = {2011}
}

@article{GourvesMonnotTlilane2015,
  author  = {Laurent Gourv{\`e}s and J{\'e}r{\^o}me Monnot and Lydia Tlilane},
  title   = {Worst Case Compromises in Matroids with Applications to the
             Allocation of Indivisible Goods},
  journal = {Theoretical Computer Science},
  volume  = {589},
  pages   = {121--140},
  year    = {2015},
  doi     = {10.1016/j.tcs.2015.04.029}
}

@article{BouveretLemaitre2016,
  author  = {Sylvain Bouveret and Michel Lema{\^\i}tre},
  title   = {Characterizing Conflicts in Fair Division of Indivisible Goods
             Using a Scale of Criteria},
  journal = {Autonomous Agents and Multi-Agent Systems},
  volume  = {30},
  pages   = {259--290},
  year    = {2016}
}

@inproceedings{HuangLu2021,
  author    = {Xin Huang and Pinyan Lu},
  title     = {An Algorithmic Framework for Approximating Maximin Share
               Allocation of Chores},
  booktitle = {Proceedings of the 22nd ACM Conference on Economics and
               Computation},
  pages     = {630--631},
  publisher = {ACM},
  year      = {2021}
}

@book{GareyJohnson1979,
  author    = {Michael R. Garey and David S. Johnson},
  title     = {Computers and Intractability: A Guide to the Theory of
               NP-Completeness},
  publisher = {W. H. Freeman},
  year      = {1979}
}

\clearpage
\appendix

\section{Proof of Lemma~\ref{lem:ordinary-tail}}
\label{app:ordinary-tail-proof}

\begin{proof}
Set
\[
  k=\left\lfloor\frac{1-\alpha}{n\alpha}\right\rfloor.
\]
We bound the new quantity \(B_{n-1}(T,p)\) through (4), choosing \(h=k+1\)
when \(p\) is small and \(h=k\) when \(p\) is large.  First,
\[
  d=
  \max\left\{
    (k+1)\alpha,\
    \frac{k+2}{n(k+1)}(1-\alpha)
  \right\}.
  \tag{9}
\]
Since \(d\ge(k+1)\alpha\), condition (6) forces \(s\ge k+1\):
no \(k+1\) items can have total greater than \(d\).  The removed items are all
at least \(p\), so
\[
  C\ge\alpha+kp.
  \tag{10}
\]

The two branches in (9) meet at
\[
  \alpha_0=
  \frac{k+2}{n(k+1)^2+k+2}.
\]
For every \(\alpha\) that gives this value of \(k\), the maximum of the two
branches is at least their common value:
\[
  d_{\min}=
  \frac{(k+1)(k+2)}{n(k+1)^2+k+2}.
\]
Consequently,
\[
  d\ge\frac{k+3}{n(k+2)+1}.
  \tag{11}
\]
After clearing positive denominators, the required difference is
\((n-2)k+n-4\ge0\).

Set \(q=n-1\) in (3)--(4).  First suppose
\(p\le d/(k+2)\), and choose \(h=k+1\).  The increasing branch of
\(G_{q,k+1}(T,p)\) is at most \(d\).  Moreover, (6) gives
\[
  T-p=1-(C+p)<1-d.
\]
Inequality (11) is equivalent to
\[
  \frac{k+3}{(n-1)(k+2)}(1-d)\le d,
\]
so the decreasing branch is also at most \(d\).

Now suppose \(p>d/(k+2)\), and choose \(h=k\).  Since \(p\le\alpha\),
\[
  (k+1)p\le(k+1)\alpha\le d.
\]
Using (10),
\[
  C+p
  \ge\alpha+(k+1)p
  >
  \alpha+\frac{k+1}{k+2}d.
\]
The decreasing branch in (9) gives
\[
  1-\alpha\le\frac{n(k+1)}{k+2}d.
\]
Thus
\[
  T-p=1-(C+p)
  <
  1-\alpha-\frac{k+1}{k+2}d
  \le
  \frac{(n-1)(k+1)}{k+2}d.
\]
The decreasing branch of \(G_{n-1,k}(T,p)\) is at most \(d\).
Equation (4) completes both cases.
\end{proof}

\clearpage
\section{Proof of Lemma~\ref{lem:exceptional-tail}}
\label{app:exceptional-tail-proof}

\begin{proof}
Set
\[
  k=\left\lfloor\frac{1-\alpha}{3\alpha}\right\rfloor.
\]
Then
\[
  d=
  \max\left\{
    (k+1)\alpha,\
    \frac{k+2}{3(k+1)}(1-\alpha)
  \right\},
  \tag{12}
\]
and the crossing assumption implies \(s\ge k+1\) and
\(C\ge\alpha+kp\).

The two exceptional indices \(k=0\) and \(k=1\) admit direct two-bin
arguments.  The remaining indices follow the same envelope comparison as in
Lemma~\ref{lem:ordinary-tail}.

If \(k=0\), then
\[
  d=\max\{\alpha,2(1-\alpha)/3\}.
\]
Every possible tail has \(p\le d\) and
\(T\le1-\alpha\le3d/2\).  If \(p>d/2\), put the largest tail item alone;
the other bin has load \(T-p<d\).  If \(p\le d/2\), greedy balancing leaves
the two loads differing by at most \(p\), so the larger load is at most
\[
  \frac{T+p}{2}\le\frac{3d/2+d/2}{2}=d.
\]
Thus the tail has a two-partition of maximum load at most \(d\), which
implies \(T\Delta_2^+(p/T)\le d\).

If \(k=1\), then \(C\ge\alpha+p\).  Greedy balancing of any tail with the
same \((T,p)\) has maximum load at most
\[
  \frac{T+p}{2}
  =
  \frac{1-C+p}{2}
  \le
  \frac{1-\alpha}{2}
  \le d.
\]

Now assume \(k\ge2\).  For every \(h\ge2\), the exact two-agent formula
implies
\[
  T\Delta_2^+(p/T)\le G_{2,h}(T,p).
  \tag{13}
\]
To verify the exceptional ranges, put \(x=p/T\).  For \(x\le1/5\), the
integer \(\lfloor(1-x)/(2x)\rfloor\) in the last line of
Proposition~\ref{prop:q2formula} is at least two.  For
\(1/5<x\le1/3\), the three exceptional branches
\[
  \frac{3(1-x)}4,\qquad
  \frac{2+3x}{5},\qquad
  2x
\]
are each at most \(3x\).  For \(x>1/3\),
\(\Delta_2^+(x)\le1<3x\).  Since the increasing branch of
\(g_{2,h}\) is at least \(3x\), (13) follows.

For this value of \(k\), the minimum possible value of \(d\) over the
corresponding range of \(\alpha\) satisfies
\[
  d\ge
  \frac{(k+1)(k+2)}{3k^2+7k+5}
  \ge
  \frac{k+3}{3k+7}
  \tag{14}
\]
The final cross-product difference is \(k-1\ge0\).

If \(p\le d/(k+2)\), use (13) with \(h=k+1\).  The increasing branch is
at most \(d\), while \(T-p<1-d\) and (14) imply
\[
  \frac{k+3}{2(k+2)}(1-d)\le d.
\]
If \(p>d/(k+2)\), use \(h=k\).  The increasing branch is at most \(d\), and
\[
  C+p
  \ge\alpha+(k+1)p
  >
  \alpha+\frac{k+1}{k+2}d.
\]
The decreasing branch in (12) then yields
\[
  T-p<
  \frac{2(k+1)}{k+2}d,
\]
so the decreasing branch of \(G_{2,k}(T,p)\) is at most \(d\).
\end{proof}

\end{document}